\documentclass[letterpaper, 10 pt, conference]{ieeeconf}  % Comment this line out
\usepackage{colortbl}
\definecolor{mygray}{gray}{.9}
\usepackage{cite}
\usepackage{color}
\usepackage{amsfonts,amssymb}
\usepackage{bm}
\usepackage{amsmath}
\usepackage{multirow}
\usepackage{algorithmic}
\usepackage{graphicx}
\usepackage{wrapfig}
\usepackage{subfigure}
\usepackage{mathtools}
\usepackage{float}
\usepackage{textcomp}
\usepackage{graphicx} %use graph format
\usepackage{epstopdf}
\usepackage{array}
\usepackage[thmmarks,amsmath]{ntheorem}
\newtheorem{mythm}{Theorem}
\newtheorem{mydef}{Definition}

\newtheorem{assumption}{Assumption}
\newtheorem{lemma}{Lemma}

\usepackage[title]{appendix}

\usepackage{array}
\usepackage{enumerate}
\usepackage{booktabs}
\usepackage{algorithm}
\usepackage{algorithmic}
\usepackage{setspace}
\usepackage{float}
\usepackage{cases}
\usepackage{mathrsfs}
\renewcommand{\QED}{\QEDopen}

\makeatletter\@mparswitchfalse\makeatother\normalmarginpar % to force the notes to the right
\usepackage[textwidth=1cm, textsize=scriptsize]{todonotes}

\IEEEoverridecommandlockouts                              % This command is only
\title{\bf A Functional Learning Approach for Team-Optimal Traffic Coordination
}

\author{Weihao Sun$^1$, Gehui Xu$^2$, 
Alessio Moreschini$^3$, Thomas Parisini$^{3,4}$, and Andreas A. Malikopoulos$^{1,2,5}$ 
\thanks{This work is supported in part by NSF under Grants CNS-2401007, CMMI-2348381, IIS-2415478, and in part by MathWorks.}
\thanks{$^1$W. Sun and A. A. Malikopoulos are with the Systems Engineering Program, Cornell University, Ithaca, NY 14850 USA. {\tt\small email: \{ws493,amaliko\}@cornell.edu}}
\thanks{$^2$G. Xu and A. A. Malikopoulos are with the School of Civil and Environmental Engineering, Cornell University, Ithaca, NY 14850 USA. {\tt\small email: \{gx62,amaliko\}@cornell.edu}}
\thanks{$^3$
A. Moreschini and T. Parisini are with the Department of Electrical and Electronic Engineering, Imperial College London, London SW7 2AZ, UK. {\tt\small email: \{
a.moreschini,\allowbreak
t.parisini\}@imperial.ac.uk}}
\thanks{$^4$T. Parisini is also with the Department of Electronic Systems, Aalborg University, Denmark and with the Department of Engineering and Architecture, University of Trieste, Italy. {\tt\small email: t.parisini@imperial.ac.uk}}
\thanks{$^5$A. A. Malikopoulos is with the Applied Mathematics, Systems Engineering, Mechanical Engineering, Electrical \& Computer Engineering, and School of Civil \& Environmental Engineering, Cornell University, Ithaca, NY, USA. (email: \texttt{amaliko@cornell.edu}) }
}

\begin{document}

\maketitle
\thispagestyle{empty}
\pagestyle{empty}

%%%%%%%%%%%%%%%%%%%%%%%%%%%%%%%%%%%%%%%%%%%%%%%%%%%%%%%%%%%%%%%%%%%%%%%%%%%%%%%%
\begin{abstract}
In this paper, we develop a kernel-based policy iteration functional learning framework for computing team-optimal strategies in traffic coordination problems. We consider a multi-agent discrete-time linear system with a cost function that combines quadratic regulation terms and nonlinear safety penalties. Building on the Hilbert space formulation of offline receding-horizon policy iteration, we seek approximate solutions within a reproducing kernel Hilbert space, where the policy improvement step is implemented via a discrete Fréchet derivative. We further study the model-free receding-horizon scenario, where the system dynamics are estimated using recursive least squares, followed by updating the policy using rolling online data. The proposed method is tested in signal-free intersection scenarios via both model-based and model-free simulations and validated in SUMO.
\end{abstract}

%%%%%%%%%%%%%%%%%%%%%%%%%%%%%%%%%%%%%%%%%%%%%%%%%%%%%%%%%%%%%%%%%%%%%%%%%%%%%%%%
\section{INTRODUCTION}

% \tp{We need to be a bit careful and tone down a bit. In this paper we are considering a very simplified team optimal problem because each agent has the entire state information available}
Team coordination problems arise in many multi-agent systems in which several decision makers must act based on shared system information while pursuing a common objective \cite{radner1962team,Malikopoulos2021}. In such settings, the collective outcome depends both on the individual optimal controls and how the actions are coordinated over time under the underlying information structure. When the agents are fully cooperative, the relevant solution concept is the team-optimal solution, which is a joint policy profile minimizing the common objective over the admissible policy class~\cite{ho1972team,zoppoli2020neural}. In these problems, one seeks feedback policies that account for interactions among agents while remaining computationally tractable over a finite horizon.

Traffic coordination provides a representative and practically important class of team decision problems. In signal-free traffic scenarios such as intersection crossing \cite{Malikopoulos2020, liu2024intersection}, lane changing\cite{sun2025recommendation,xu2025game,wang2025corra}, and other coordinated maneuvers, each vehicle affects the motion and safety margin of the others. One representative scenario is the coordination of multiple connected and automated vehicles (CAVs) at a signal-free intersection, where vehicles are able to cooperate with each other to cross safely and efficiently \cite{Malikopoulos2020,liu2024intersection}.
%Motivated by this setting, this paper focuses on signal-free intersection coordination as a concrete scenario through which the proposed method can be formulated and evaluated.

To compute team-optimal feedback policies, this paper adopts a policy iteration functional learning framework.
% \tp{We need to say that our goal is to compute the coordination policy off-line which is the justification for using a functional approach. It is not clear at the moment} 
Repeatedly solving a finite-horizon optimal control problem may be computationally demanding. A common alternative is to improve a parameterized feedback policy through policy gradient methods \cite{Makila1987GP,fazel2018global, hu2023toward}. In existing research efforts, policy gradient methods have been applied to data-driven controls on the linear quadratic regulator \cite{zhao2023data}. Other results show the possibility to analyze the convergence theory of policy gradient with robustness guarantee \cite{zhang2021policy}. Also, to apply to similar traffic problems involving CAVs, some existing work utilizes policy optimization methods to address safety issues \cite{zhao2023multi, wei2019mixed}. However, standard policy gradient approaches are often sensitive to the choice of learning rate, which may affect both convergence and approximation accuracy \cite{Alessio2024Fretchet}. %Moreover, the standard policy gradient method remains dependent on the parameterization choice of the policy, which could introduce approximation errors. 
Recent efforts~\cite{moreschini2024non} considered a policy update step characterized via a discrete Fréchet derivative~\cite{Alessio2024Fretchet} of the finite-horizon cost, leading to independence from the learning rate choice and guaranteeing a monotonically non-increasing cost across iterations.

%Furthermore, in this work, the viewpoint is extended to a team coordination problem, and approximate optimal policies are constructed in a reproducing kernel Hilbert space (RKHS) through finite kernel expansions.

%The problem addressed in this paper is therefore to compute approximate team optimal feedback policies for a finite-horizon traffic coordination problem under both known and unknown system dynamics with a concrete signal-free intersection scenario through which the proposed method can be formulated and evaluated. 
In this paper, we address the problem of computing team-optimal feedback policies for a finite-horizon traffic coordination problem in signal-free intersection scenarios involving CAV coordination and potential human-driven vehicles (HDVs).
% \tp{What do we mean by "unknown dynamics"? The dynamics of the vehicles is known (double integrators in Eq. (4))}
% \textcolor{blue}{In the unknown-dynamics setting, we consider the scenario that includes a human-driven vehicle (HDV) whose driving response is unknown and interacts with the controlled CAVs.}
Following~\cite{moreschini2024non,Alessio2024Fretchet}, we develop an offline policy iteration algorithm where the policy improvement step is implemented through an implicit Fréchet derivative update. We show that the resulting sequence of policies yields a non-increasing finite-horizon cost and converges to a local team-optimal solution. We further provide a finite kernel-based approximation, implemented via a reproducing kernel Hilbert space (RKHS) \cite{berlinet2011reproducing}, along with a complexity analysis. Then, we extend this framework to an online receding-horizon setting in which the unknown dynamics are estimated through recursive least squares (RLS) followed by the policy updating over rolling horizon windows. We show that the cost decreases monotonically within each update window, and the algorithm admits a suboptimal team solution while reducing computational complexity. Finally, we demonstrate the proposed approach in intersection coordination scenarios under both model-based and model-free settings, and validate the learned controller in SUMO. 

The remainder of the paper is organized as follows. 
In Section~II, we introduce the problem formulation, including the dynamics and the optimal control problem. 
In Section~III, we present the model-based offline learning and, in Section~IV, we provide the model-free online learning framework. 
In Section~V, we provide a simulation experiment using SUMO to validate the effectiveness of the proposed framework. 
Finally, in Section~VI, we draw concluding remarks and discuss potential directions for future research.

% In the offline case, a full horizon policy update scheme is constructed where it is refined across the full horizon through an implicit improvement step. The resulting sequence of policies is analyzed to establish monotonic improvement of the associated cost under functional update. In the online case, the same framework is adapted to a receding-horizon structure, where the policy updates are performed over shorter rolling horizon windows, after estimating the system dynamics via recursive least squares (RLS). 

%Literature Reviews here...

% The main contributions of this paper are as follows. First, we formulate the signal-free traffic coordination as a finite-horizon team-optimal control problem with a nonlinear cost and cast the associated feedback design problem in a functional learning framework. Based on this formulation, we develop a kernel-based offline policy iteration algorithm where the policy improvement step is implemented through an implicit Fréchet derivative update. Then, we extend this framework to an online receding-horizon setting in which the unknown dynamics are estimated through RLS followed by the policy updating over rolling horizon windows. Finally, we demonstrate the proposed approach in intersection coordination scenarios under both model-based and model-free settings, and validate the learned controller in SUMO. 

\section{Team Coordination}
\subsection{Notation}
Let $\mathbb{R}$ and $\mathbb{N}$ denote the set of real and natural numbers, respectively. For $n,m \in \mathbb{N}$, $\mathbb{R}^n$ denotes the space of real column vectors of dimension $n$, and $\mathbb{R}^{n \times m}$ denotes the space of real matrices of dimension $n \times m$. For a vector $x \in \mathbb{R}^n$, $\|x\|$ denotes the Euclidean norm, and for a matrix $A$, $A^\top$ denotes its transpose. Given two matrices $A_1,A_2\in\mathbb{R}^{n \times n}$, 
the notation $A_1 \succ A_2$ ($A_1 \succeq A_2$) indicates that $A_1 - A_2$ is positive (semi-)definite. The discrete time index is denoted by $t \in \{0,1,\dots,T\}$, where $T \in \mathbb{N}$ is the finite horizon length. 
%State-feedback control policies are denoted by $\pi_t : \mathbb{R}^n \rightarrow \mathbb{R}^m$. 
The policies are assumed to belong to a Hilbert space $\mathcal{H}$ equipped with inner product $\langle \cdot, \cdot \rangle_{\mathcal H}$ and induced norm $\|\cdot\|_{\mathcal H}$. Expectation with respect to a random variable $x$ distributed according to $p$ is denoted by $\mathbb{E}_{x \sim p}[\cdot]$. For Monte Carlo approximations, $\{x^{(i)}\}_{i=1}^{N}$ denotes a set of i.i.d. samples drawn from $p$.

\subsection{Problem Formulation}

Consider a team problem with $N\in\mathbb{N}$ team members, indexed by $\mathcal{N}=\{1,\dots,N\}$. Each team member $i\in\mathcal{N}$ determines its control action $u_{i,t}$ at time $t$. 
The team collectively controls the discrete-time linear system
\begin{equation}\label{eq:linear_sys}
x_{t+1} = A x_t + \sum_{i=1}^N B_i u_{i,t}, 
\qquad t = 0,1,\dots,T-1,
\end{equation}
where 
$x_t \in \mathbb{R}^n$ is the state and 
$u_{i,t}$ is the control input of team member $i$. 
% \textcolor{blue}{where $\mathcal{U}_i$ is nonempty.}
% \tp{See comment just after eq. (8)}
Define the stacked input
$
u_t = \operatorname{col}(u_{1,t},\dots,u_{N,t}) \in \mathbb{R}^{m}$, $  m=\sum_{i=1}^N m_i,
$
and the aggregate input matrix
$
B = [B_1 \ \cdots \ B_N].
$
Then \eqref{eq:linear_sys} can be written compactly as 
\begin{equation}\label{eq:linear_sys_2}
    x_{t+1} = A x_t + B u_t.
\end{equation}

The initial state $x_0$ is a random variable with a known distribution $p_0$.
% \tp{I guess we should assume $p_0$ to be known, right?}
All team members independently determine their strategies and collaborate in minimizing a shared finite-horizon cost function given by
%We consider the finite-horizon cost function
%\begin{equation}
\begin{align}\label{eq:FH_cost}
J_{\mathrm{FH}}(x_0,u_{0:T-1})
&=
\mathbb{E}
\Bigg[
\sum_{t=0}^{T-1}
\Big(
x_t^\top Q x_t
+
u_t^\top R u_t
+
\psi(x_t)
\Big) \notag\\
&\qquad
+
x_T^\top Q_F x_T
+
\psi_F(x_T)
\Bigg],
\end{align}
where 
$Q,Q_F,R \succ   \!0$ are symmetric positive definite  matrices. 
The nonlinear functions 
$\psi:\mathbb{R}^n \!\to \!\mathbb{R}$ and 
$\psi_F:\mathbb{R}^n \!\to \!\mathbb{R}$ 
are continuously differentiable, bounded from below, and satisfy
$
\psi(0)\!=\!0$, $  \psi_F(0)\!=\!0.
$
The objective of the team is to find control policies 
$\{u_{i,t}\}_{i\in\mathcal{N},\, t=0}^{T-1}$
minimizing \eqref{eq:FH_cost}.

Common cost functions such as~\eqref{eq:FH_cost} can represent a wide range of traffic coordination scenarios. 
%Modern intelligent transportation systems involve multiple interacting vehicles operating in a shared and dynamic environment. 
% Typical applications include signal-free intersection coordination~\cite{Malikopoulos2020,liu2024intersection}, lane changing~\cite{sun2025recommendation}, vehicle merging~\cite{venkatesh2023connected}, and other coordinated maneuvers in mixed-autonomy traffic.
We focus on the signal-free intersection coordination problem as a representative scenario and provide a concrete formulation.
% Modern intelligent transportation systems involve multiple interaction vehicles operating in a shared and dynamic environment. Typical application scenarios in transportation include signal free intersection crossing coordination~\cite{Malikopoulos2020,liu2024intersection}, lane changing~\cite{sun2025recommendation}, vehicle merging~\cite{venkatesh2023connected}, and other coordinated maneuvers in mixed autonomy settings. In this paper, we focus on the signal free intersection coordination scenario. 

%\subsection{Policy Parameterization}

For each vehicle, we consider the following dynamics:
% \tp{Say that CT dynamics will be discretized}
\begin{equation}\label{eq:vehicle_dynamic}
\begin{aligned}
\dot{p}_i(t) = v_i(t), \quad
%\\
\dot{v}_i(t) = u_i(t),
\end{aligned}
\end{equation}
where {$p_i(t) \in \mathbb{R}^2$} denotes the position, $v_i(t) \in \mathbb{R}$ denotes the speed, and $u_i(t)\in \mathbb{R}$ denotes the acceleration, which serves as the control input. We discretize the continuous-time dynamics in our problem.

The team seeks to minimize a common performance objective that captures tracking performance, control effort, and collision avoidance, \emph{i.e.}, 
\begin{align}\label{eq:objective_intersection}
\min_{u_{0:T-1}} \;
J
&=
\sum_{t=0}^{T-1}
\Bigg(
\sum_{i \in \mathcal{N}}
\big(
x_{i,t}^\top Q x_{i,t}
+
u_{i,t}^\top R u_{i,t}
\big)
+
\Phi(x_t)
\Bigg)\notag \\
&\quad
+
\sum_{i \in \mathcal{N}}
x_{i,T}^\top Q_F x_{i,T}
+
\Phi(x_T).
\end{align}
The function $\Phi(\cdot)$ represents a collision-avoidance penalty and serves as a soft safety constraint \cite{liu2024intersection}. 
To avoid double-counting pairwise interactions, it is defined  as
\begin{equation}\label{eq:team_collision_penalty}
\Phi(x_t)
\coloneq 
\sum_{1 \le i < j \le N}
\frac{d_d^2}{d_{ij}(t)^2 + \delta},
\end{equation}
where $d_d > 0$ is the desired safety distance and $\delta>0$ is a small constant. The cost \eqref{eq:team_collision_penalty} penalizes small inter-vehicle distances 
\begin{equation}\label{eq:dij}
d_{ij}(t) = \|p_i(t) - p_j(t)\|_2.
\end{equation}
%with $p_i(t)$ denoting the position of vehicle $i$.
%We now define the objective function in the specific intersection coordination scenario as follows:
% The common objective function is given by
% \begin{equation}\label{eq:objective_intersection}
% \begin{aligned}
% \min_{u_{i,0:T-1}} \; 
% J_i =
% \sum_{t=0}^{T-1}
% \Big(
% x_{i,t}^\top Q x_{i,t}
% + u_{i,t}^\top R u_{i,t}
% + \psi_i(x_t)
% \Big)
% \\ + x_{i,T}^\top Q_F x_{i,T} + \psi_{i,F}(x_T),
% \end{aligned}
% \end{equation}
% We add the penalty on collision as a soft constraint defined by $\psi_i(x_t)$:

% \begin{equation}\label{eq:psi}
% \psi_i(x_t)
% =
% \sum_{j \in \mathcal{N}_i}
% \frac{d_d^2}{d_{ij}(t)^2 + \delta},
% \end{equation}

% where 
% \begin{equation}\label{eq:dij}
% d_{ij}(t)
% =
% \sqrt{\big(p_i(t)-p_j(t)\big)^2},
% \end{equation}

%\subsection{Stage-wise Cost Functional}

% Following a stage-wise policy iteration scheme, we define the policy improvement functional
% \begin{equation}\label{eq:Jte}
% \begin{aligned}
% J_t^e(\pi_{t:T-1})
% &=
% \mathbb{E}
% \Big[
% x_t^\top Q x_t
% +
% \pi_t(x_t)^\top R \pi_t(x_t)
% +
% \psi(x_t)\\
% &\quad+
% V_{t+1}(A x_t + B \pi_t(x_t))
% \Big].
% \end{aligned}
% \end{equation}

We consider all team members making decisions under a closed-loop information structure, \emph{i.e.},  they choose instantaneous control strategies 
$u_{i,t}(x_t)$  based on the observed state~$x_t$. 
For each  $i \in \mathcal{N}$, let 
$
\pi_{i,t}:  \mathbb{R}^{n} \to \mathbb{R}^{m_i}
$
denote a state-feedback control policy at time $t$. Assume that $\pi_{i,t}$ belongs to a function space $\mathcal{H}_i$, where $\mathcal{H}_i$ is a Hilbert space of measurable functions from $\mathbb{R}^n$ to $\mathbb{R}^{m_i}$. 
Define the joint policy at time $t$ as
$
\pi_t = \mathrm{col}\{\pi_{1,t}, \dots, \pi_{N,t}\}\in \mathcal{H},
$ where $\mathcal{H} \coloneq \prod_{i=1}^N \mathcal{H}_i$, 
and the policy sequence
$
\pi_{0:T-1} \coloneq  \{\pi_0, \dots, \pi_{T-1}\}.
$
Following a policy-based formulation, the expected finite-horizon cost is defined as
\begin{equation}\label{eq:expected_cost}
J(\pi_{0:T-1})
\!\coloneq \!
\mathbb{E}_{x_0 \sim p_0}\!
\left[
J_{\mathrm{FH}}(x_0,u_{0:T-1})
\right],
u_{i,t} \!=\! \pi_{i,t}(x_t),  \!i\! \in\! \mathcal{N}.
\end{equation}
% Let  $\mathcal{T} \coloneq  \{0,\dots,T-1\}$ be the time sequence and $\Pi$ denote the set of admissible (state-feedback) 
% %\tp{I'm not sure our functional methodology can handle constrained functional optimization as yet. We need to add soft penalties on $u_t$}
% policy sequences defined as follows:
% \[
% \Pi\! \coloneq \!
% \Big\{
% \pi_{0:T-1} \!= \!\{\pi_0,\dots,\pi_{T-1}\}
%  \Big| 
% \pi_{i,t}\!:\!\mathbb{R}^n \!\to\! \mathcal{U}_i,
%  i \in \mathcal{N}, t \!\in\! \mathcal{T}
% \Big\}.
% \]
The team outcome resulting from the joint decisions is characterized by the team-optimal solution~\cite{radner1962team,zoppoli2020neural,Malikopoulos2021}. This solution corresponds to a strategy profile under which no unilateral or joint deviation by team members can yield improved collective performance~\cite{xu2025does,Malikopoulos2021},
\begin{mydef}[Team-optimal Solution]
A policy sequence $
\pi_{0:T-1}^\star
$
is called a team-optimal solution if
\[
J(\pi_{0:T-1}^\star) \le J(\pi_{0:T-1}),
\qquad \forall \pi_{0:T-1} \in \prod_{t=0}^{T-1} \mathcal{H}.
\]
\end{mydef}
%The team-optimal solution characterizes the cooperative behavior of all members under full collaboration.

In this paper, the objective is to find the team-optimal policy $\pi_{0:T-1}^\star$ that minimizes the cost \eqref{eq:FH_cost}, subject to the system dynamics in \eqref{eq:linear_sys_2}.
%find solution of the team-optimal problem defined in Problem 1.

% \begin{problem}\label{team_prob}
% Find the finite-horizon team optimal policy $\{\pi^\star_t\}, t=0,\cdots,T-1$, and such that minimizes the cost \eqref{eq:FH_cost}, subject to the system dynamic in \eqref{eq:linear_sys_2}.
% \end{problem}

\section{Model-based offline learning}

% In this paper, we first extend the offline functional learning algorithm in \cite{} to a model-based traffic coordination setting with known dynamics, and specialize it to cross intersection scenarios considering safety. 

In this section, we propose an offline approach to compute the closed-loop optimal control policy. The system dynamics are assumed to be known, and the policy is improved through the full horizon in each iteration via an implicit Fréchet derivative update~\cite{moreschini2024non}.
% Due to the presence of the nonlinear state cost terms $\psi$ and $\psi_F$,
% the optimal policy is generally nonlinear.
% In this paper, we seek approximate solutions within a linearly parameterized (possibly nonlinear) policy class.

\subsection{Policy evaluation and improvement}

To compute the sequence of policies, we adopt a policy iteration framework~\cite{bertsekas2012dynamic}. The policy iteration consists of two steps at each iteration: policy evaluation and improvement, repeated until convergence.

For each stage $t$, we define the cost-to-go function
\begin{equation}
V_T(x) \coloneq  x^\top Q_F x + \psi_F(x),
\end{equation}
and for $t=T-1,\dots,0$,
\begin{align}\label{eq:cost_to_go1}
V_t(x_t)
&=
x_t^\top Q x_t
+
\pi_t(x_t)^\top R \pi_t(x_t)
+
\psi(x_t)\notag
\\
&\quad+
V_{t+1}(A x_t + B \pi_t(x_t)).
\end{align}

\noindent\textbf{{Policy Evaluation}.}
Let $\pi_{0:T-1} = \{\pi_0, \dots, \pi_{T-1}\}, \pi_t \in \prod_{t=0}^{T-1} \mathcal{H}$ be a given sequence of feedback policies in a Hilbert space.  The forward dynamics for \eqref{eq:linear_sys}  with $\pi_{0:T-1}$ are simulated for all initial states $x_0 \sim p_0$, evaluating $V_0,\dots,V_{T-1}$ along the state trajectories.

\noindent\textbf{Policy Improvement.}
The policy sequence $\pi_{0:T-1}$ is improved recursively backwards through the time horizon by minimizing the cost functional $\tilde{J_t}:\mathcal{H}^{T-t}\rightarrow \mathbb{R}$ defined by
% \begin{equation}\label{eq:piimprove_cost}
% \tilde{J_t}(\pi_{t:T-1}) = \mathbb{E}_{x_t\sim p_t}\Big[h(x_tm\pi_t(x_t)) + V_{t+1}\big(f(x_t,\pi_t(x_t))\big) \Big]
% \end{equation}
\begin{equation}\label{eq:piimprove_cost}
\begin{aligned}
\tilde{J_t}(\pi_{t:T-1})
&=
\mathbb{E}
\Big[
x_t^\top Q x_t
+
\pi_t(x_t)^\top R \pi_t(x_t)
+
\psi(x_t)\\
&\quad+
V_{t+1}(A x_t + B \pi_t(x_t))
\Big].
\end{aligned}
\end{equation}

\subsubsection{Fréchet Derivative in Hilbert Space}
In the existing literature, the functional gradient descent method is one of the most common approaches for minimizing \eqref{eq:piimprove_cost} by seeking the optimal solution along the gradient of the cost functional. However, this approach is sensitive to the learning rate and is difficult to fine-tune to ensure convergence in Hilbert spaces for the algorithm. Therefore, as proposed in \cite{Alessio2024Fretchet}, a discrete Fréchet derivative method is adapted to the algorithm, which is capable of addressing this gap.

\begin{mydef}[Fréchet Differentiable]
A function $S:\mathcal{H}\rightarrow\mathbb{R}$ is called Fréchet differentiable at $\phi\in\mathcal{H}$ if there exists a $dS(\phi)\in\mathcal{H}$ such that 
\begin{equation}
\lim_{\|\delta \phi\| \to 0^+}
\frac{
|S(\phi + \delta \phi) - S(\phi) - dS(\phi)(\delta \phi)|
}{
\|\delta \phi\|
}
= 0.
\end{equation}
\end{mydef}

\begin{mydef}[Fréchet Derivative]\label{def:Fréchet Derivative}
Let $S : \mathcal{H} \to \mathbb{R}$ be a function that is Fréchet differentiable on $\mathcal{H}$. For any $\phi, \psi \in \mathcal{H}$ and the Riesz representation of the Fréchet derivative $DS(\phi)\in\mathcal{H}$, the Fréchet derivative $\overline{DS}:\mathcal{H}\times\mathcal{H}\rightarrow\mathcal{H}$ satisfies
\begin{equation}
\langle \phi - \psi, \overline{DS}(\phi,\psi) \rangle
=
S(\phi) - S(\psi),
\end{equation}
\begin{equation}
\lim_{\|\phi-\psi\|\to 0}
\overline{DS}(\phi,\psi)
=
DS(\psi).
\end{equation}
\end{mydef}

\subsubsection{Policy Iteration by Discrete Fréchet Derivatives}
We now specialize the discrete Fréchet framework to the stage-wise policy improvement functional defined in \eqref{eq:piimprove_cost}. We denote $\pi_{0:T-1}^k 
\coloneq  
\{\pi_0^k, \ldots, \pi_{T-1}^k\}$ as the policy sequence at iteration $k$. Our objective is to construct an updated sequence $\pi^{k+1}_{0:T-1}$ such that the total expected cost decreases monotonically, \emph{i.e.},

\begin{equation}\label{eq:cost_diff}
\tilde{J}_0(\pi_{0:T-1}^{k+1})
-
\tilde{J}_0(\pi_{0:T-1}^k)
\le 0.
\end{equation}

Using the backward decomposition implied by \eqref{eq:cost_to_go1} and~\eqref{eq:piimprove_cost}, the total cost difference in \eqref{eq:cost_diff} can be given by
\begin{align}\label{eq:cost_diff_eq}
&\sum_{t=0}^{T-2}
\Big(\widetilde{J_{t}}(\pi_t^{k+1}, \pi^{k+1}_{t+1:T-1})
\Big) + \widetilde{J}_{T-1}(\pi^{k+1}_{T-1})\notag \\
&\qquad -\sum_{t=0}^{T-2}
\Big(\widetilde{J_{t}}(\pi_t^{k}, \pi^{k+1}_{t+1:T-1})
\Big) - \widetilde{J}_{T-1}(\pi^{k}_{T-1}) \le 0.
\end{align}
By combining the summation, we have
\begin{align}\label{eq:cost_diff_eq2}
&\sum_{t=0}^{T-2}
\Big(\widetilde{J_{t}}(\pi_t^{k+1}, \pi^{k+1}_{t+1:T-1}) -
\widetilde{J_{t}}(\pi_t^{k}, \pi^{k+1}_{t+1:T-1})
\Big) \notag \\
&\qquad + \widetilde{J}_{T-1}(\pi^{k+1}_{T-1}) - \widetilde{J}_{T-1}(\pi^{k}_{T-1}) \le 0.
\end{align}
Therefore, it is sufficient to ensure that for every stage $t$, 
\begin{equation}\label{eq:J_cost_pi}
\widetilde J_t(
\pi_t^{k+1},
\pi_{t+1:T-1}^{k+1}
)
-
\widetilde J_t(
\pi_t^{k},
\pi_{t+1:T-1}^{k+1}
)
\le 0.
\end{equation}

Then, by Definition~\ref{def:Fréchet Derivative}, we can rewrite the left hand side of equation \eqref{eq:J_cost_pi} as
\begin{align}
\widetilde J_t(
\pi_t^{k+1},&
\pi_{t+1:T-1}^{k+1}
)
-
\widetilde J_t(
\pi_t^{k},
\pi_{t+1:T-1}^{k+1}
)\notag\\
=&\langle\pi_{t}^{k+1}-\pi_{t}^{k},\overline{D\tilde{J_{t}}}(\pi_{t}^{k+1},\pi_{t}^{k},\pi_{t+1:T-1}^{k+1})\rangle,
\end{align}
where $\overline{D\tilde{J_{t}}}(\cdot,\cdot,\pi_{t+1:T-1}^{k+1}):\mathcal{H}\times\mathcal{H}\rightarrow\mathcal{H}$ denotes the discrete Fréchet derivative corresponding to each pair of cost differences in \eqref{eq:cost_diff_eq2}. We then adopt the implicit update rule in which the policies $\pi_{t+1:T-1}^{k+1}$ are updated by taking a step in the direction of the discrete Fréchet derivative:
\begin{equation}\label{eq:implicit_update}
\pi_{t}^{k+1}=\pi_{t}^{k}-\delta\overline{D\tilde{J}_{t}}\big(\pi_{t}^{k+1},\pi_{t}^{k},\pi_{t+1:T-1}^{k+1} \big),
\end{equation}
where $\delta>0$ is the learning rate.

On this basis, following the lines of~\cite{moreschini2024non}, we can establish the convergence guarantee of \eqref{eq:implicit_update} to a local team-optimal solution.
% {  On this basis, we can establish the convergence guarantee of \eqref{eq:implicit_update} to a local minimum.   In particular, any accumulation point of the sequence ${\pi_{0:T-1}^{k}}$ is a local minimum, which satisfies the first-order necessary condition for the team-optimal policy $\pi^\star_{0:T-1}$.
%a consistent decrease in the cost function evaluation with each iteration until convergence
% In particular, any accumulation point of the sequence ${\pi_{0:T-1}^{k}}$ is a stationary point, which satisfies the first-order necessary condition for the team-optimal policy $\pi^\star_{0:T-1}$.

\begin{mythm}\label{thm1}
Consider the policy iteration defined by the cost functional $\tilde{J}_t$ in \eqref{eq:piimprove_cost}. Let the policy update at each iteration $k$ be governed by the implicit rule \eqref{eq:implicit_update} with learning rate $\delta>0$. Then,  for every iteration $k$ and any $\delta>0$, %at each time step $t$,
%the algorithm generates a monotonically decreasing sequence of the $t$-th cost $\widetilde{J_{t}}(\pi_t^{k+1}, \pi^{k+1}_{t+1:T-1})$. Moreover, 
\begin{align}\label{eq:thm1}
&\widetilde{J}_0(\pi_{0:T-1}^{k+1}) - \widetilde{J}_0(\pi_{0:T-1}^{k}) = \notag\\
&\qquad-\frac{1}{\delta} \sum_{t=0}^{T-1} \| \pi_{t}^{k+1} - \pi_{t}^{k} \|_{\mathcal{H}}^{2} \le 0.
\end{align}
% and the total expected cost i monotonically non-increasing for any $\delta>0$.
% for any $\delta$, with vanishing increments on aggregated policy $\sum_{t=0}^{T-1} \| \pi_{t}^{k+1} - \pi_{t}^{k} \|_{\mathcal{H}}^{2} \to0$ as $k\to\infty$.     
Therefore, the sequence $\{\widetilde J_0(\pi_{0:T-1}^{k})\}_{k\ge0}$ is monotonically non-increasing
%. If $\widetilde J_0$ is lower bounded, then 
and converges as $k\to\infty$, and
$
\sum_{t=0}^{T-1}\|\pi_t^{k+1}-\pi_t^k\|_{\mathcal H}^2 \to 0 .
$
\end{mythm}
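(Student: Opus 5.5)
The plan is to first establish the exact identity \eqref{eq:thm1} stage by stage, and then read off monotonicity and convergence.

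\textbf{Step 1: telescoping.} I will use the backward decomposition of the total cost difference already written in \eqref{eq:cost_diff_eq}--\eqref{eq:cost_diff_eq2}. By \eqref{eq:cost_to_go1} and \eqref{eq:piimprove_cost}, $\widetilde J_t(\pi_t,\pi_{t+1:T-1})$ equals the expected stage cost at $t$ plus $\widetilde J_{t+1}(\pi_{t+1:T-1})$ evaluated along the closed loop. Adding and subtracting the mixed terms $\widetilde J_t(\pi_t^k,\pi_{t+1:T-1}^{k+1})$ then writes $\widetilde J_0(\pi^{k+1}_{0:T-1})-\widetilde J_0(\pi^k_{0:T-1})$ as a sum over $t=0,\dots,T-1$ of single-stage differences. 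In each such difference only $\pi_t$ changes, while the tail is frozen at $\pi^{k+1}_{t+1:T-1}$; for $t=T-1$ the tail is empty. I will take this decomposition as given from the preceding derivation.

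One point needs care. The expectation at stage $t$ is over the state distribution induced by the earlier policies, and these differ between iterates. I will therefore interpret $\widetilde J_t$ as in \cite{moreschini2024non}: a functional of the policy pair on a fixed state measure consistent with the backward recursion. Under that reading the telescoping is exact. I expect this to be the main subtlety, and I will state it explicitly as the convention inherited from \eqref{eq:cost_diff_eq}.

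\textbf{Step 2: each stage.} For fixed $t$, Definition~\ref{def:Fréchet Derivative} applied to $S(\cdot)=\widetilde J_t(\cdot,\pi^{k+1}_{t+1:T-1})$ gives
\begin{equation*}
\widetilde J_t(\pi_t^{k+1},\pi^{k+1}_{t+1:T-1})-\widetilde J_t(\pi_t^{k},\pi^{k+1}_{t+1:T-1})=\big\langle \pi_t^{k+1}-\pi_t^k,\overline{D\tilde J_t}(\pi_t^{k+1},\pi_t^k,\pi^{k+1}_{t+1:T-1})\big\rangle .
\end{equation*}
The implicit rule \eqref{eq:implicit_update} gives $\overline{D\tilde J_t}(\pi_t^{k+1},\pi_t^k,\pi^{k+1}_{t+1:T-1})=-\frac1\delta(\pi_t^{k+1}-\pi_t^k)$. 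Substituting, each stage difference equals $-\frac1\delta\|\pi_t^{k+1}-\pi_t^k\|_{\mathcal H}^2$. Summing over $t$ yields \eqref{eq:thm1}, which is $\le 0$ since $\delta>0$. This step holds for any $\delta>0$ because the identity is exact, not a descent bound.

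I am assuming the implicit equation \eqref{eq:implicit_update} admits a solution $\pi_t^{k+1}$. The theorem presupposes that the iteration is well defined, and this solvability is the one hypothesis I would flag. If it were needed, I would note that $\pi_t^{k+1}=\arg\min_\pi \widetilde J_t(\pi,\cdot)+\frac{1}{2\delta}\|\pi-\pi_t^k\|^2$ would give an alternative construction, but I will not rely on it.

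\textbf{Step 3: convergence.} Monotonicity is immediate from \eqref{eq:thm1}. For a lower bound, note that $Q,Q_F,R\succ0$ and that $\psi,\psi_F$ are bounded below, so $\widetilde J_0\ge T\inf\psi+\inf\psi_F>-\infty$. A non-increasing sequence that is bounded below converges. Summing \eqref{eq:thm1} over $k=0,\dots,K$ gives
\begin{equation*}
\frac1\delta\sum_{k=0}^{K}\sum_{t=0}^{T-1}\|\pi_t^{k+1}-\pi_t^k\|_{\mathcal H}^2=\widetilde J_0(\pi^0_{0:T-1})-\widetilde J_0(\pi^{K+1}_{0:T-1})\le \widetilde J_0(\pi^0_{0:T-1})-\inf\widetilde J_0 .
\end{equation*}
The series is therefore summable, and its terms tend to zero, i.e. $\sum_{t=0}^{T-1}\|\pi_t^{k+1}-\pi_t^k\|_{\mathcal H}^2\to0$.
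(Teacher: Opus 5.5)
Your proposal is correct and follows essentially the same route as the paper's proof. Both use the stage-wise decomposition, then the discrete Fréchet identity combined with the implicit update to get $-\frac{1}{\delta}\|\pi_t^{k+1}-\pi_t^k\|_{\mathcal H}^2$ per stage, then sum over $t$, sum over $k$, and finish with a summability argument. One refinement: your explicit lower bound $\widetilde J_0\ge T\inf\psi+\inf\psi_F$ is more careful than the paper, which bounds the partial sums by $\widetilde J_0(\pi^0_{0:T-1})$ and so tacitly assumes $\widetilde J_0\ge 0$.
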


\begin{proof}
From \eqref{eq:cost_diff} to \eqref{eq:implicit_update}, we can express the cost difference at each time step $t$ as
\begin{align}
&\widetilde J_t\!\left(
\pi_t^{k+1},\pi_{t+1:T-1}^{k+1}
\right)
-
\widetilde J_t\!\left(
\pi_t^{k},\pi_{t+1:T-1}^{k+1}
\right) \notag\\
&\qquad\qquad=
-\frac{1}{\delta}
\left\|
\pi_t^{k+1}-\pi_t^{k}
\right\|_{\mathcal H}^2 \leq0,
\end{align}
which implies that the sequence of the t-th cost $\widetilde J_t\!\left(
\pi_t^{k+1},\pi_{t+1:T-1}^{k+1}
\right)$ is monotonically decreasing.
Therefore, by summing through $t$ we can obtain the left-hand side of \eqref{eq:cost_diff_eq2} as 
\begin{align}\label{eq:thm1_proof}
&\widetilde{J}_0(\pi_{0:T-1}^{k+1}) - \widetilde{J}_0(\pi_{0:T-1}^{k}) = \notag\\
&\qquad-\frac{1}{\delta} \sum_{t=0}^{T-1} \| \pi_{t}^{k+1} - \pi_{t}^{k} \|_{\mathcal{H}}^{2} \le 0.
\end{align}
Then, we sum the total cost difference over iteration $k=0,1,...,K$, we have 
\begin{equation}
\frac{1}{\delta}
\sum_{k=0}^{K}
\sum_{t=0}^{T-1}
\left\|
\pi_t^{k+1}-\pi_t^{k}
\right\|_{\mathcal H}^2
=\tilde{J}_0(\pi^0_{0:T-1})-\tilde{J}_0(\pi^{k+1}_{0:T-1}).  
\end{equation}
Since we already showed the cost is monotonically decreasing, we have 
\begin{equation}\label{eq:thm1_conv}
\frac{1}{\delta}
\sum_{k=0}^{K}
\sum_{t=0}^{T-1}
\left\|
\pi_t^{k+1}-\pi_t^{k}
\right\|_{\mathcal H}^2 \leq \tilde{J}_0(\pi^0_{0:T-1}).
\end{equation}
Since the inner summation is always nonnegative, for all $k$, and the outer summation is now monotonically non-decreasing. To ensure the double summation is bounded as \eqref{eq:thm1_conv}, from series theory, we need to have $\sum_{t=0}^{T-1} \| \pi_{t}^{k+1} - \pi_{t}^{k} \|_{\mathcal{H}}^{2} \to0$ when $K\to \infty$, and the proof is complete.
\end{proof}

\subsection{Implementation}

Since the problem lies in an infinite-dimensional functional space, we consider a finite-dimensional numerical approximation for implementation.

% In this paper, we seek approximate solutions within a linearly parameterized (possibly nonlinear) policy class.

\subsubsection{Monte Carlo Approximation}

Let $\{x_0^{(i)}\}_{i=1}^N$ be i.i.d.\ samples drawn from $p_0$.
For a given policy sequence $\pi_{0:T-1}$, forward simulation of \eqref{eq:linear_sys} yields state trajectories
$\{x_t^{(i)}\}_{t=0}^T$.

The functional $\tilde{J_t}$ is approximated by
\begin{align}\label{eq:Jt_hat}
\hat J_t(\pi_t)
=
\frac{1}{N}
\sum_{i=1}^N
&\Big[
x_t^{(i)\top} Q x_t^{(i)}
+
\pi_t(x_t^{(i)})^\top R \pi_t(x_t^{(i)})
+
\psi(x_t^{(i)})\notag\\
&
+
V_{t+1}(A x_t^{(i)} + B \pi_t(x_t^{(i)}))
\Big].
\end{align}

Let the vector of policy evaluations at the sampled states be defined as 
\begin{equation}
\boldsymbol{\pi}_t
\coloneq 
\mathrm{col}\big(
\pi_t(x_t^{(1)}),\dots,\pi_t(x_t^{(N)})
\big)
\in \mathbb{R}^{mN}.
\end{equation}

\subsubsection{RKHS-based Policy Representation}

Assume that $\pi_t$ belongs to a RKHS $\mathcal{H}$ with kernel $K(\cdot,\cdot)$.
The policy is approximated by a finite kernel expansion
\begin{equation}\label{eq:kernel_policy}
\pi_t(\cdot)
=
\sum_{j=1}^M c_{t,j} K(\cdot,\bar x_{t,j}),
\end{equation}
where $\{\bar x_{t,j}\}_{j=1}^M$ are fixed dictionary points. Let $c_t\coloneq \mathrm{col}(c_{t,1},\dots,c_{t,M})$.
Then $\boldsymbol{\pi}_t = K_{s,t} c_t$, where $(K_{s,t})_{ij} = K(x_t^{(i)},\bar x_{t,j})$.
Define the Gram matrix by
\begin{equation}
(K_t)_{ij} \coloneq  K(\bar x_{t,i},\bar x_{t,j}).
\end{equation}

The RKHS representation is used to convert the functional policy update
into a finite-dimensional coefficient update, enabling numerical solution of the
implicit learning step from sampled trajectory data.

\subsubsection{Implicit Policy Improvement}

% The policy update is performed via an implicit discretization of the Fréchet derivative
% \begin{equation}\label{eq:implicit_update}
% \boldsymbol{\pi}_t^{k+1}
% =
% \boldsymbol{\pi}_t^{k}
% -
% \delta \,
% D \hat J_t
% \big(
% \boldsymbol{\pi}_t^{k+1},
% \boldsymbol{\pi}_t^{k}
% \big),
% \end{equation}
% where $\delta>0$ is a step size.

Let
$\Delta \boldsymbol{\pi}_t \coloneq  \boldsymbol{\pi}_t^{k+1} - \boldsymbol{\pi}_t^{k}$.
Consider an approximation for the discrete Fréchet derivative given by
\begin{equation}
D \hat J_t
=
\begin{cases}
\displaystyle
\Delta \boldsymbol{\pi}_t
\frac{
\hat J_t(\boldsymbol{\pi}_t^{k+1}) - \hat J_t(\boldsymbol{\pi}_t^{k})
}{
\|\Delta \boldsymbol{\pi}_t\|_2^2
},
& \Delta \boldsymbol{\pi}_t \neq 0, \\[2ex]
0,
& \text{otherwise}.
\end{cases}
\end{equation}

Substituting $\boldsymbol{\pi}_t = K_{s,t} c_t$ yields the parameter update equation
\begin{equation}\label{eq:ck_update}
K_t c_t^{k+1}
=
K_t c_t^{k}
-
\delta K_{s,t}^\top
D \hat J_t.
\end{equation}

%{\color{blue}Computation complexity: linear QP non-linear? of \eqref{eq:ck_update}}

The update rule given by~\eqref{eq:ck_update} leads to a nonlinear optimization problem in the kernel coefficients $c_t^{k+1}$, since the empirical cost $\hat{J}_t$ depends nonlinearly on the policy output.
Under the RKHS parameterization with $m$ kernel functions and $n$ Monte Carlo samples, evaluating the policy vector and the empirical cost requires $O(mn)$ operations per time step. If a Newton-type method is employed to solve the resulting nonlinear problem, each inner iteration requires $O(m^3)$ operations due to solving an $m$-dimensional linear system.
Since the system state is formed by stacking the states of all team members, the state dimension grows linearly with $N$. Kernel evaluations, therefore, require computing distance in a space whose dimension scales with $N$, leading to a time cost $O(mnN)$. Moreover, each residual evaluation requires recomputing the forward trajectory from the current stage $t$ to the terminal horizon $T$. Consequently, the cost of evaluating the objective at each time step $t$ requires $O((T-t)nmN)$. Summing over all stages yields $O(nmNT^2)$. Therefore, the overall per-iteration complexity scales as $O\big(m^3T + nmNT^2\big)$. The computational complexity is thereby quadratic in the horizon length $T$, linear in the number of Monte Carlo samples $n$ and the number of members $N$, and cubic in the number of kernel basis functions $m$.

\subsection{Policy Iteration Algorithm}

From the preceding analysis, the resulting policy iteration algorithm can be given as follows:

\begin{algorithm}[H]
\caption{Offline learning}
\begin{algorithmic}[1]
\label{alg:offline}
\STATE Initialize $\{\pi_t^0\}_{t=0}^{T-1}$ and sample $\{x_0^{(i)}\}_{i=1}^N$
\FOR{$k=0,1,2,\dots$}
    \STATE \textbf{Forward simulation:} generate trajectories using \eqref{eq:linear_sys}
    \STATE \textbf{Backward evaluation:} compute $V_t$ for $t=T,\dots,0$
    \FOR{$t=T-1,\dots,0$}
        \STATE Solve \eqref{eq:ck_update} for $c_t^{k+1}$
    \ENDFOR
\ENDFOR
\end{algorithmic}
\end{algorithm}

Algorithm~\ref{alg:offline} implements a finite-horizon policy iteration process including policy evaluation and improvement using the Fréchet derivative. Starting from an initial sequence of policies, a forward simulation is performed using the current policy at each iteration. Subsequently, a backward policy improvement step is carried out from $t=T-1$ to $t=0$, by solving an implicit equation \eqref{eq:ck_update} derived from the Fréchet derivative. The policy returned by Algorithm~\ref{alg:offline} is an approximate local team-optimal solution of  $\pi^\star_{0:T-1}$. 
% This distinction arises due to the RKHS parameterization and with Monte Carlo approximations. Nevertheless, 
Under sufficiently rich kernel dictionaries and sampling, the learned policy can be expected to approach $\pi^\star_{0:T-1}$.

% {\color{blue}Establish the connection with team-optimal policy $\pi^\star_{0:T-1}$ I think is an approximate local team solution}

However, Algorithm~\ref{alg:offline} is implemented in an offline model-based setting and requires prior knowledge of the system. Moreover, the main computational burden of this algorithm arises from solving the non-linear implicit equation \eqref{eq:ck_update} over the entire horizon at each policy update step, as well as from the number of kernel basis functions used to construct the policy and the sampling size. 
This raises a trade-off between computational cost and approximation accuracy. In particular, the quality of the learned policy depends on both the number of Monte Carlo samples used to approximate  $\tilde{J}_t$ and the number of kernel basis functions used in the RKHS. Increasing the sample size generally reduces the variance of the empirical cost approximation, while increasing the number of kernel functions could better capture the nonlinear structure induced by the safety penalty. 
%At the same time, these gains lead to higher computational cost as shown in the complexity analysis above. 
Therefore, a sufficiently large number of sampling points and kernel basis functions can yield a policy that is closer to the team-optimal solution. However, this also makes the offline learning procedure more computationally expensive, especially for long horizons and multi-agent scenarios, as explained in the complexity analysis above. 

To address this, in what follows, we further consider an online model-free setting by reformulating the current algorithm into a receding-horizon scheme. 
% Specifically, at each time step, the policy is iteratively updated over a shorter time window. The improved policy is then implemented for the next step, after which the time window is shifted forward, and the same procedure is repeated.

\section{Model-free Online learning}

In this section, we propose the online learning algorithm under a model-free setting. We first collect online data and use recursive least squares (RLS) to estimate the system parameters $A$ and $B$, and then update the policy in a receding-horizon fashion using newly collected online data until convergence.

Consider the model-free setting in which $A$ and $B$ are unknown,
\begin{equation}\label{eq:cl_map}
x_{t+1}=A x_t + B \pi_t(x_t).
% x_{t+1}=F_{c}(x_t)\coloneq A x_t + B \pi_t(x_t),
%+ \sigma(x_t),
\end{equation}
% where $\pi^c_t:\mathbb{R}^n\to\mathbb{R}^m$ is a parameterized state-feedback policy (e.g., RKHS/kernel policy).
% %and {\color{red}$\sigma(\cdot)$} is an nonlinearity/disturbance.
% We focus on recursive stability: the closed-loop remains stable under a sequence of policy updates
% $\{c_k\}_{k\ge 0}$ produced by an online learning algorithm.

Fix a finite horizon $T$ and a receding window length $H$.
At each real time $s\in\{0,1,\dots,T-1\}$, define the planning horizon end index $\bar T_s \coloneq  \min\{T,\;s+H\}$. Given the current state $x_s$ and the model estimate $(\hat A_s,\hat B_s)$,
we consider the predicted dynamics on the remaining horizon $t=s,\dots,\bar T_s-1$:
\begin{equation}\label{eq:pred_dyn_selfcontained}
\hat x_{t+1}
=
\hat A_s \hat x_t+\hat B_s\,\pi_{s,t}(\hat x_t),
\qquad
\hat x_s=x_s.
\end{equation}
Let the remaining objective be
\begin{align}\label{eq:Jhat_selfcontained}
\tilde{J}_s(\boldsymbol \pi_s) 
& \coloneq 
\sum_{t=s}^{\bar T_s-1}
\left(
\hat x_t^\top Q \hat x_t
+
\pi_{s,t}(\hat x_t)^\top R \,\pi_{s,t}(\hat x_t)
+
\psi(\hat x_t)
\right) \notag \\
&\qquad \qquad+ 
\hat x_{\bar T_s}^\top Q_F \hat x_{\bar T_s}
+
\psi_F(\hat x_{\bar T_s}),
\end{align}
where $\boldsymbol \pi_s\coloneq \{\pi_{s,t}\}_{t=s}^{\bar T_s-1}$ and $\{\hat x_t\}$ is generated by \eqref{eq:pred_dyn_selfcontained}.
A backward implicit Fr\'echet policy improvement routine minimizes \eqref{eq:Jhat_selfcontained} by updating
$t=\bar T_s-1,\dots,s$, and returns a candidate sequence
$\tilde{\boldsymbol \pi}_s=\{\tilde \pi_{s,t}\}_{t=s}^{\bar T_s-1}$.
% Only the first policy parameter $\tilde c_{s,s}$ is applied to the real system.

The online learning scheme combines model identification and policy improvement in a receding-horizon framework. Starting from 
$s=0$, an initial excitation phase is performed to collect data and estimate the unknown system matrices using RLS. 
Once the estimation scheme terminates, 
% At each real-time step, the methodology described in section~III is applied only on the truncated horizon $[s,\bar T_s]$ using the estimated system parameters.
at each real-time step, the current system estimates are used to formulate a control problem over a short planning window. This problem is solved using the methodology described in Section III.
Only the first policy in the resulting sequence is implemented on the real system, after which the planning window is shifted forward, and the procedure is repeated. In this way, the algorithm yields a receding-horizon feedback policy based on the identified model, thereby avoiding the computational burden of solving the full-horizon problem offline.

% We first use system ID to estimate  in a time period and then solving the sequence receding-horizon problems ........

\noindent\paragraph{Linear regression formulation}
We first describe the model identification process \cite{ioannou1996robust}. 
Define the parameter matrix
$\Theta \coloneq  [A \;\; B] \in \mathbb{R}^{n \times (n+m)}$,
and the regressor vector
$\phi_s \coloneq 
\begin{bmatrix}
x_s \\
u_s
\end{bmatrix}
\in \mathbb{R}^{n+m}$.
Then the system dynamics can be written as a linear regression model
$x_{s+1} = \Theta \phi_s$.

Let $\hat\Theta_0$ and $M_0 \succ 0$ be given.
At each time step $s \ge 0$, the parameter estimate is updated recursively using the newly collected data triple $(x_s,u_s,x_{s+1})$.
\begin{align}\label{eqRLS}
\phi_s &= [x_s;u_s], \\
L_s &= \frac{M_s\phi_s}{1+\phi_s^\top M_s\phi_s}, \notag\\
\varepsilon_s &= x_{s+1}-\hat\Theta_s\phi_s, \notag\\
\hat\Theta_{s+1} &= \hat\Theta_s+\varepsilon_s L_s^\top, \notag\\
M_{s+1} &= \frac{1}{\lambda}\big(M_s - L_s\phi_s^\top M_s \big), \notag
\end{align}
% \begin{align}\label{RLS}
% \phi_k &=
% [x_k ; u_k]
%  \notag\\
% L_k &= \frac{M_k \phi_k}{\lambda + \phi_k^\top M_k \phi_k}, \notag\\
% \varepsilon_k &= x_{k+1} - \hat\Theta_k \phi_k, \\
% \hat\Theta_{k+1} &= \hat\Theta_k + \varepsilon_k L_k^\top, \notag\\
% M_{k+1} &= \frac{1}{\lambda}
% \left(M_k - L_k \phi_k^\top M_k\right). \notag
% \end{align}
where $\lambda \in (0,1]$ is the forgetting factor.
The matrices $\hat A_s$ and $\hat B_s$ are obtained from $\hat\Theta_s = [\hat A_s \;\; \hat B_s]$.

% \begin{assumption}\label{est_error}
% At each policy iteration $k$, an online identification routine returns $(\hat A_k,\hat B_k)$ such that
% \begin{equation}\label{eq:est_err_bounds}
% \|A-\hat A_k\|\le \varepsilon_{A,k},\qquad \|B-\hat B_k\|\le \varepsilon_{B,k},
% \end{equation}
% where $\varepsilon_{A,k},\varepsilon_{B,k}\ge 0$ are known bounds.
% \end{assumption}

\begin{assumption}\label{pe}
The regressor sequence $\{\phi_k\}$ is persistently exciting of order $n+m$, i.e., 
% there exist $N_{\mathrm{PE}}\in\mathbb{N}$ and $\alpha>0$ such that for all
% $k\ge N_{\mathrm{PE}}$,
% \[
% \sum_{i=k-N_{\mathrm{PE}}}^{k-1}\phi_i\phi_i^\top \succeq \alpha I_{n+m}.
% \]
there exist constants
$N_{\mathrm{PE}} > 0$ and $\alpha > 0$ such that for all $s \ge N_{\mathrm{PE}}$,
\begin{equation}\label{eq:PE}
\sum_{i=s-N_{\mathrm{PE}}}^{s-1}
\phi_i \phi_i^\top
\succeq
\alpha I_{n+m}.
 \end{equation}
\end{assumption}

\begin{algorithm}[t]
\caption{Online learning}
\label{alg:online_backward}
\begin{algorithmic}[1]
\STATE{Horizon $T$, window length $H$ (set $H=T$ for full remaining horizon), identification length $N_{\mathrm{id}}$, stability margin $\gamma\in(0,1)$, initial $c_{0,0}\in\mathcal S_0$.}
{Collect an initial dataset to start RLS; set $s=0$.}

\hspace{-0.4cm}\textbf{Model identification:}

\STATE Apply an excitation policy for $s=0,\dots,N_{\mathrm{id}}-1$.
\FOR{$s=0,1,\dots,N_{\mathrm{id}}-1$}
\STATE Observe $x_{s+1}$ from the system.
\STATE Update $(\hat A_s,\hat B_s)$ using the RLS update \eqref{eqRLS} with the newly observed data $(x_s,u_s,x_{s+1})$.
\ENDFOR

\STATE Use the estimated $(\hat{A}, \hat{B})$ for policy updating.

\hspace{-0.4cm}\textbf{Policy Update:}
\FOR {$s=N_{\mathrm{id}},N_{\mathrm{id}}+1,\dots,T-1$}{
% \STATE \textbf{(RLS model update)}
%\STATE \textbf{(RLS model update on real data)} 
%Update $(\hat A_k,\hat B_k)$ using all data $\{(x_\tau,u_\tau,x_{\tau+1})\}_{\tau=0}^{k-1}$,
%and obtain bounds $(\varepsilon_{A,k},\varepsilon_{B,k})$ as in \eqref{eq:est_err_bounds}.\\

\STATE %\textbf{(Define remaining horizon)} 
Set $\bar T_s\coloneq \min\{T,s+H\}$. 

\STATE Solve %(approximately) 
the estimated remaining-horizon problem
\[
\min_{\mathbf c_s=\{c_{s,t}\}_{t=s}^{\bar T_s-1}} \ \hat J_s(\mathbf c_s)
\]
% where $\hat J_k$ is defined by \eqref{eq:Jhat_def} and 
% \[
% \hat x_{t+1}=\hat A_k \hat x_t+\hat B_k\,\pi_{c_{k,t}}(\hat x_t),
% \hat x_k=x_k,
% t=k,\dots,\bar T_k-1.
% \]
The backward implicit Fr\'echet routine updates $t=\bar T_s-1,\dots,s$ and returns a candidate sequence
$\tilde{\mathbf c}_s=\{\tilde c_{s,t}\}_{t=s}^{\bar T_s-1}$.\\
% \STATE
% \textbf{(Safe execution at the current time)} Let $\tilde c_{k}\coloneq \tilde c_{k,k}$ be the candidate parameter to be executed now.
% Compute the maximal safe step size
% \[
% \alpha_k\coloneq \max\Big\{\alpha\in[0,1]:\ c_{k}^{\mathrm{exe}}+\alpha(\tilde c_k-c_{k}^{\mathrm{exe}})\in\mathcal S_k\Big\},
% \]
% and set the executed parameter
% \[
% c_{k}^{\mathrm{exe},+}\coloneq c_{k}^{\mathrm{exe}}+\alpha_k(\tilde c_k-c_{k}^{\mathrm{exe}}).
% \]
\STATE Apply
$
u_s=\pi^{\tilde{c}}_{{s},s}(x_s)$,
and observe  $x_{s+1}.$

\STATE 
%\textbf{(Warm-start for the next planning step)} 
%Optionally 
Set the next planning initializer by shifting:
$c_{s+1,t}^{(0)} \leftarrow \tilde c_{s,t}$ for $t=s+1,\dots,\bar T_s-1$.
}
\ENDFOR
\end{algorithmic}
\end{algorithm}

\begin{lemma}
Under Assumption~\ref{pe}, $(\hat{A}_s, \hat{B}_s)$ obtained by the RLS estimation converges to the true  $(A,B)$.  
\end{lemma}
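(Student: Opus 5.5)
The plan is to exploit the fact that the data are noise-free, $x_{s+1}=\Theta\phi_s$ exactly. This makes the RLS recursion \eqref{eqRLS} an exact weighted least-squares solver. Convergence then reduces to invertibility of the information matrix, which Assumption~\ref{pe} guarantees.

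\textbf{Step 1: closed form of $M_s$.} Define the estimation error $\tilde\Theta_s\coloneq \hat\Theta_s-\Theta$. Then $\varepsilon_s=x_{s+1}-\hat\Theta_s\phi_s=-\tilde\Theta_s\phi_s$, and the parameter update gives
\[
\tilde\Theta_{s+1}=\tilde\Theta_s\big(I-\phi_sL_s^\top\big).
\]
Applying the matrix inversion lemma to the covariance update yields
\[
M_{s+1}^{-1}=\lambda\big(M_s^{-1}+\phi_s\phi_s^\top\big),
\]
and unrolling this recursion gives
\[
M_s^{-1}=\lambda^{s}M_0^{-1}+\sum_{i=0}^{s-1}\lambda^{s-i}\phi_i\phi_i^\top .
\]

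\textbf{Step 2: a quantity that is conserved along the recursion.} I will show that
\[
\tilde\Theta_{s+1}M_{s+1}^{-1}=\lambda\,\tilde\Theta_s M_s^{-1}.
\]
To verify it, use $I-\phi_sL_s^\top=(I+M_s\phi_s\phi_s^\top)^{-1}$, which is the transpose-side form of the Sherman--Morrison identity. Then
\[
\big(I-\phi_sL_s^\top\big)\,\lambda\big(M_s^{-1}+\phi_s\phi_s^\top\big)=\lambda M_s^{-1}.
\]
Iterating gives $\tilde\Theta_s=\lambda^s\tilde\Theta_0M_0^{-1}M_s$, so
\[
\|\tilde\Theta_s\|\le \lambda^s\|\tilde\Theta_0\|\,\|M_0^{-1}\|\,\|M_s\|.
\]
The entire problem therefore reduces to bounding $\lambda^s M_s$.

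\textbf{Step 3: use persistent excitation.} This is the main step. Dropping terms and using Assumption~\ref{pe} on the last window,
\[
\lambda^{-s}M_s^{-1}\succeq M_0^{-1}+\sum_{i=0}^{s-1}\lambda^{-i}\phi_i\phi_i^\top\succeq M_0^{-1}+\lambda^{-(s-1)}\alpha I,
\]
where the second inequality uses $\lambda^{-i}\ge\lambda^{-(s-1)}$ only for $i$ in the last window, since $\lambda\le 1$. Consequently
\[
\lambda^sM_s\preceq \big(\lambda^{-(s-1)}\alpha\big)^{-1}I .
\]
The two cases then separate as follows.
\begin{itemize}
\item For $\lambda=1$, the bound tightens further. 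Partition $\{0,\dots,s-1\}$ into $\lfloor s/N_{\mathrm{PE}}\rfloor$ disjoint windows; this gives $\sum\phi_i\phi_i^\top\succeq\lfloor s/N_{\mathrm{PE}}\rfloor\alpha I$. Hence $\|M_s\|\to0$ like $1/s$, and $\tilde\Theta_s\to0$.
\item For $\lambda<1$, the display above gives $\|\lambda^sM_s\|\le\lambda^{s-1}/\alpha\to0$, in fact exponentially.
\end{itemize}
In both cases $\tilde\Theta_s\to0$, so $(\hat A_s,\hat B_s)\to(A,B)$.

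The main obstacle is the algebraic identity in Step 2 under the forgetting-factor normalization used in \eqref{eqRLS}. There the gain is $1+\phi^\top M\phi$ rather than $\lambda+\phi^\top M\phi$, so the covariance recursion must be checked carefully; if the constant differs, I would redo the invariant with the correct $\lambda$ scaling. A secondary point is that the argument relies on noise-free data and on Assumption~\ref{pe} holding on the identification data; with disturbances, only boundedness or convergence in a neighborhood would follow.
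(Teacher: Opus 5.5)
Your argument is correct, and it is a more complete version of the paper's proof. The paper sets $\lambda=1$ and derives the same error recursion $\tilde\Theta_{s+1}=\tilde\Theta_s(I-\phi_sL_s^\top)$. It then writes the information matrix as $M_s^{-1}=M_0^{-1}+\sum_{i<s}\phi_i\phi_i^\top$ and uses Assumption~\ref{pe} to get $\|M_s\|\to0$. For the final implication, ``$M_s\to0$ under PE implies $\tilde\Theta_s\to0$,'' it simply cites Ioannou--Sun.

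Your invariant $\tilde\Theta_sM_s^{-1}=\lambda^s\tilde\Theta_0M_0^{-1}$ is exactly the noise-free identity behind that citation, so you close the step the paper outsources. You also gain three things:
\begin{itemize}
\item explicit rates, $O(1/s)$ for $\lambda=1$ and geometric for $\lambda<1$;
\item coverage of the whole range $\lambda\in(0,1]$ permitted in \eqref{eqRLS}, which the paper's proof restricts to $\lambda=1$;
\item correct treatment of the nonstandard gain $1+\phi_s^\top M_s\phi_s$, which gives $M_{s+1}^{-1}=\lambda(M_s^{-1}+\phi_s\phi_s^\top)$ rather than the textbook $\lambda M_s^{-1}+\phi_s\phi_s^\top$.
\end{itemize}
The obstacle you flagged at the end is not one. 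The invariant holds exactly with this normalization, as direct expansion of $(I-\phi_sL_s^\top)(M_s^{-1}+\phi_s\phi_s^\top)=M_s^{-1}$ confirms.

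There are two small slips to fix.

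First, your Sherman--Morrison factorization has its factors in the wrong order. The correct form is $I-\phi_sL_s^\top=(I+\phi_s\phi_s^\top M_s)^{-1}$, which pairs with $M_s^{-1}+\phi_s\phi_s^\top=(I+\phi_s\phi_s^\top M_s)M_s^{-1}$. The matrix you wrote, $(I+M_s\phi_s\phi_s^\top)^{-1}$, actually equals $I-L_s\phi_s^\top$. Your displayed conclusion is nonetheless correct.

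Second, in Step 3 with $\lambda<1$, the weights $\lambda^{-i}$ increase with $i$. On the last PE window you therefore only have $\lambda^{-i}\ge\lambda^{-(s-N_{\mathrm{PE}})}$, not $\lambda^{-i}\ge\lambda^{-(s-1)}$. The correct bound is $\|\lambda^sM_s\|\le\lambda^{s-N_{\mathrm{PE}}}/\alpha$, which still decays geometrically, so your conclusion is unaffected.
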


\begin{proof}
Since the system satisfies $x_{s+1}=\Theta\phi_s$, and using Assumption~\ref{pe}, we define the parameter estimation error as $\tilde{\Theta}_s\coloneq \Theta-\hat{\Theta}_s$. The a priori prediction error becomes $\epsilon_s = x_{s+1}-\hat{\Theta}_s\phi_s=\tilde{\Theta}_s\phi_s$. Substituting this into the RLS update in \eqref{eqRLS}, the error recursion $\tilde{\Theta}_{s+1}=\tilde{\Theta}_s-\tilde{\Theta}_s\phi_s L_s^\top=\tilde{\Theta}_s(I-\phi_s L_s^\top)$. 
Assume $\lambda=1$. From the matrix inversion lemma, the covariance update in \eqref{eqRLS} is equivalent to $M_{s+1}^{-1}=M_s^{-1}+\phi_s\phi_s^\top$, and thus $M_s^{-1}=M_0^{-1}+\sum_{i=0}^{s-1}\phi_i\phi_i^\top$. 
Under the PE condition, there exists constants $N_{PE}>0$ and $\alpha>0$ such that $\sum_{i=s-N_{PE}}^{s-1}\phi_i\phi_i^\top \succeq \alpha I_{n+m}$ for all sufficiently large $s$. Accumulating these uniformly positive definite terms over time implies that $\lambda_{\min}(M_s^{-1}) \to \infty$ as $s \to \infty$, and hence $\|M_s\| \to 0$. As established in the standard convergence properties of the RLS algorithm, see \cite{ioannou1996robust}, when the regressor sequence satisfies PE, the covariance matrix $M_s$ converges to $0$, which guarantees that the parameter estimation error also converges to $0$. Therefore, $\hat{\Theta}_s\to\Theta$, implying $(\hat{A}_s, \hat{B}_s) \to (A,B)$.
% We then consider the Lyapunov function $V_s=\operatorname{tr}(\tilde{\Theta}_s M_s^{-1}\tilde{\Theta}_s^\top)$ which is nonnegative and non-increasing under the RLS recursions. Under the PE condition, $V_s \to 0$, and hence $\tilde{\Theta}_s \to 0$ as $s \to \infty$. Consequently, $
\end{proof}

Based on the identified $\hat{A}$, $\hat{B}$, we then seek the policy sequence over each rolling window based on the discrete Fréchet derivative. 
%T
%he following results shows the convergences of Alg

%{\color{blue}  and in Theorem 2, first say something  that  the policy update converges in each windows, and then say the algorithm~2 under the whole horizon converges and the state is bounded  }

% At time $k$, the planning step produces a time-varying parameter sequence
% $\{c_{k,t}\}_{t=k}^{\bar T_k-1}$.
% Following the receding-horizon principle, only the first parameter
% $c_{k,k}$ is applied to the real system.
% The optimization is repeated at the next time step.

\begin{mythm}\label{online_bounded}
%Consider the system in \eqref{eq:linear_sys_2} for $t=0,1,\dots,T-1$ over a finite horizon $T<\infty$, and policy $\pi_c(x_t)$ is associated to the RKHS expansion. Assume the system is operating under $\pi_t(x_t)$. 
For $t=0,1,\dots,T-1$, the Fréchet policy update \eqref{eq:implicit_update} converges in each time window $[t,t+H]$ and  $\|x_t\|$ is bounded. 
% the receding-horizon problem admits a local minimum. Moreover, $\|x_t\|$ is bounded and Algorithm~2 converges on the entire horizon.
\end{mythm}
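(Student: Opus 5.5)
The argument has two parts. First, inside each window we reuse the monotonicity argument of Theorem~\ref{thm1}. Second, over the finite real-time horizon we get boundedness of $\|x_t\|$ by propagating through the true dynamics.

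\textbf{Convergence within a window.} Fix a real time $s$ and the estimate $(\hat A_s,\hat B_s)$. The window functional $\tilde J_s$ in \eqref{eq:Jhat_selfcontained} has the same structure as \eqref{eq:piimprove_cost}: the truncated horizon $\bar T_s$ plays the role of $T$, and $(\hat A_s,\hat B_s)$ replaces $(A,B)$. Its stage-wise functionals therefore admit the same backward decomposition \eqref{eq:cost_diff_eq}--\eqref{eq:cost_diff_eq2}. Applying the implicit rule \eqref{eq:implicit_update} at each stage $t=\bar T_s-1,\dots,s$ and invoking Theorem~\ref{thm1} verbatim on this window gives
\[
\tilde J_s(\boldsymbol\pi_s^{k+1})-\tilde J_s(\boldsymbol\pi_s^{k})=-\frac{1}{\delta}\sum_{t=s}^{\bar T_s-1}\|\pi_{s,t}^{k+1}-\pi_{s,t}^{k}\|_{\mathcal H}^2\le 0 .
\]
The cost is a sum of positive definite quadratics plus $\psi,\psi_F$, which are bounded below. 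So $\tilde J_s$ is bounded below by $\underline c:=(\bar T_s-s)\inf\psi+\inf\psi_F$. Hence $\{\tilde J_s(\boldsymbol\pi_s^k)\}_k$ is non-increasing and bounded, and therefore converges. The same telescoping as in Theorem~\ref{thm1} yields $\sum_t\|\pi_{s,t}^{k+1}-\pi_{s,t}^k\|^2\to0$. This is the convergence claim in each window $[s,s+H]$.

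\textbf{Boundedness of $x_t$.} Let $\tilde J_s^\star$ be the value returned at the converged iterate. By monotonicity it is at most the initial value $\tilde J_s(\boldsymbol\pi_s^{(0)})$, which is finite. Since $Q,R\succ0$ and $\psi$ is bounded below, we have
\[
\lambda_{\min}(R)\,\|\tilde\pi_{s,s}(x_s)\|^2\le \tilde J_s(\boldsymbol\pi_s^{(0)})-\underline c ,
\]
so the applied input $u_s$ is bounded by a constant depending on $x_s$ and the initializer. Then $\|x_{s+1}\|\le\|A\|\|x_s\|+\|B\|\|u_s\|$. During the identification phase $s<N_{\mathrm{id}}$ the excitation inputs are bounded by design, so $x_{N_{\mathrm{id}}}$ is bounded. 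Because $T<\infty$, induction over $s=N_{\mathrm{id}},\dots,T-1$ gives $\max_{t\le T}\|x_t\|<\infty$.

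\textbf{Main obstacle.} The difficulty is to make the bound on $\tilde J_s(\boldsymbol\pi_s^{(0)})$ depend only on $x_s$ in a controlled way. This requires the warm-start policies (shifted from the previous window) to have at most polynomial growth, so that predicted trajectories under $(\hat A_s,\hat B_s)$ remain finite. I would handle this in two steps. First, use the RKHS representation \eqref{eq:kernel_policy} with a bounded kernel, which makes each $\pi_{s,t}$ bounded by $\|c_{s,t}\|\sup K$. Second, observe that the estimates are bounded: the RLS lemma gives convergence to $(A,B)$, and before convergence the estimates are finite because the RLS recursion involves finitely many steps. A cleaner uniform bound could use the shifted warm start. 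This shows $\tilde J_{s+1}(\text{shift})$ is close to $\tilde J_s^\star$ minus the stage cost at $s$, up to an error proportional to $\|\hat\Theta_{s+1}-\hat\Theta_s\|$. That difference vanishes under Assumption~\ref{pe}, but it is not needed for finite $T$.
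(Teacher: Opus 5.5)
Your proposal is correct and follows the same route as the paper. Within each window you inherit the monotone descent of Theorem~\ref{thm1}; you then bound the applied input through $\lambda_{\min}(R)\|u_s\|^2$ by the non-increasing window cost, and propagate through the true linear dynamics over the finite horizon. Your bookkeeping is cleaner than the paper's. You bound each $u_s=\tilde\pi_{s,s}(x_s)$ by its own window cost (using $\hat x_s=x_s$), subtract the lower bound of $\psi,\psi_F$, and handle the identification-phase inputs separately. The paper instead bounds $\sum_{s=0}^{T-1}\|\pi_s(x_s)\|^2$ by a single window cost, which does not follow in general, and implicitly takes $\psi\ge 0$.
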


\begin{proof}
At each time $s$ in the window $H$, Algorithm~\ref{alg:online_backward} solves the estimated finite-horizon problem to minimize \eqref{eq:Jhat_selfcontained} within $\bar{T}_s=\min\{T,s+H\}$. We denote $\boldsymbol \pi^0_s=\{\pi^0_{s,t}\}^{\bar T_s-1}_{t=s}$ be the initial policy sequence for the window at time $s$, and $\tilde{\boldsymbol \pi}_s=\{\tilde{\pi}_{s,t}\}^{\bar T_s-1}_{t=s}$ be the sequence returned by the backward implicit Fréchet policy improvement over $t=\bar T_s, \dots, s$. Using the descent argument in the proof of Theorem 1, we can write $\hat J_s(\tilde{\boldsymbol{\pi}}_s)\leq \hat J_s(\boldsymbol{\pi^0_s})$. 
Similar to Theorem 1, the backward update produces a decreasing window cost relative to the initial policy sequence.
Since the rolling time index $s$ takes only finite values in $\{0,1,\dots,T-1\}$, the argument above applies to every time window encountered by Algorithm~\ref{alg:online_backward}. 

Under Lemma 1, For each time step $s$, using the system definition in \eqref{eq:linear_sys_2}, we can write $x_s=A^sx_0 + \sum^{s-1}_{j=0}A^{s-1-j}B\pi_j(x_j)$. Then, by the triangle inequality, we have $\|x_s\|\leq \|A^s\|\|x_0\| + \sum^{s-1}_{j=0}\|A^{s-1-j}B\|\|\pi_j(x_j)\|$. Then we need to show that $\|\pi_j(x_j)\|$ is bounded for each $j$ in the summation. From each stage cost, we have $C_J\|\pi_s(x_s)\|^2 \leq \pi_s(x_s)^\top R\pi_s(x_s)$, where $C_J\coloneq \lambda_{\min}(R)$ is a constant. 
Moreover, by the monotonically decreasing window cost shown previously, since the horizon is finite,  we can write $C_J\sum^{T-1}_{s=0}\|\pi_s(x_s)\|^2\leq \tilde{J}_s(\tilde{\boldsymbol{\pi}}_s)\leq \tilde J_s(\boldsymbol{\pi^0_s})$. Hence, $\sum^{T-1}_{s=0}\|\pi_s(x_s)\|^2 \leq \frac{J_0}{C_J}$. Therefore, for each time $s$, we have $\|\pi_s(x_s)\|\leq \sqrt{\frac{J_s(\boldsymbol{\pi^0_s})}{C_J}}$, indicating that the $\|\pi_j(x_j)\|$ is bounded in previous inequality, resulting in a bounded $\|x_s\|$ for each time $s$, which concludes the proof.
% backup: 
% For each fixed window, the estimated objective is non-increasing under the
% backward implicit update, so the returned control sequence has cost no larger
% than that of the initializer. Hence the control inputs generated within the
% window are bounded by the corresponding finite window cost. Since only the
% first input is implemented and the overall horizon is finite, the applied input
% sequence $\{u_s\}_{s=0}^{T-1}$ is bounded. It follows from the linear dynamics
% $x_{s+1}=Ax_s+Bu_s$ that $\{x_s\}_{s=0}^{T}$ is bounded.
\end{proof}

The online learning scheme is summarized in Algorithm~\ref{alg:online_backward}. 
%It combines model identification and policy improvement in a receding-horizon framework.
An initial excitation phase is first used to collect data and estimate $(A,B)$ through RLS. Once the estimation scheme converges, then, at each real-time step, a control problem over a short time window is formulated based on the current system estimates. Similar to Algorithm~\ref{alg:offline}, Algorithm~\ref{alg:online_backward} does not compute the policy in the infinite-dimensional space. Instead, it solves the approximate remaining-horizon problem over $[s,\bar T_s]$ using the estimated model and an RKHS parameterization to obtain $\pi^{\tilde{c}}_t(x_t)$ for $t=s,\dots,\bar{T}_s-1$, where the backward implicit Fréchet update yields $\hat{J}$ as an approximation of $\tilde{J}$.
%In particular, the policy at each stage is represented by $\pi^c_s(x_s)$ using the kernel expansion with the corresponding kernel coefficient sequence $c$ over the current window. 
The first policy in this sequence is then implemented on the system, after which the planning window is shifted forward, and the procedure is repeated.

% Similar to Algorithm~\ref{alg:offline}, Algorithm~\ref{alg:online_backward} does not compute the policy in the infinite dimensional space....
% %the exact team optimal policy $\pi^\star_{0:T-1}$ directly.
%  Instead, it solves the approximate remaining horizon problem over the interval from $s$ to $\bar T_s$ using the current model estimate and the RKHS parameterization. In particular, the policy at each stage is represented by $\pi^c_s(x_s)$ using the kernel expansion with the corresponding kernel coefficient sequence $c$ over the current window.  Thus, 

% In Algorithm~\ref{alg:online_backward}, the receding-horizon update reduces the computation at each time step by solving the policy improvement problem only over the shorter time window~$H$, instead of over the full remaining horizon.
%while the system identification stage removes the requirement for prior knowledge of the system dynamics.
%Under the same sample size and kernel basis functions, using a smaller window length $H$ generally lowers the computational complexity. 
However, in Algorithm~\ref{alg:online_backward}, since the computational cost is saved by optimizing over a truncated horizon, the resulting receding-horizon policy is a suboptimal solution relative to the full horizon problem. Clearly, different choices of the horizon length $H$ lead to a trade-off between solution accuracy and computational cost.

% {\color{blue}establish the connection with $\pi^\star_{0:T-1}$, like a sub-optimal team solution....
% Similarly to the implementation in Alg~1, we use approximation here....... summarized in Alg~2
% Consider using $\pi^c_t$, $\pi_{0:T-1}^c$
% }

\section{Simulation}

In this section, we evaluate the proposed method on a signal-free intersection coordination problem under both offline and online settings.

In the offline setting, we consider a signal-free intersection with two CAVs approaching from orthogonal directions. Each vehicle is assigned a fixed path through the intersection, without turning or changing lanes. Each vehicle is modeled by the double integrator in \eqref{eq:vehicle_dynamic}, where the state of vehicle $i$ consists of its position $p_i$ and its speed $v_i$, and the control input is its acceleration. The control objective \eqref{eq:objective_intersection} is to coordinate their longitudinal motions so both vehicles cross the conflict area safely and efficiently. In the online setting, we consider a mixed traffic extension scenario with two CAVs and one  HDV. CAV-1 travels from west to east, HDV travels from east to west, and CAV-2 travels from south to north. The two CAVs remain controlled, whereas the HDV is human-driven with behavior unknown to the CAVs, resulting in unknown system dynamics.

% We first illustrate how the offline algorithm can learn an approximate team-optimal feedback policy when the system dynamics are known. Then, we show how the online receding-horizon algorithm can obtain the policy when the system is unknown.  
In the numerical implementation, the expectation in the policy improvement functional is approximated by Monte Carlo sampling. The vehicles are initialized upstream of the intersection region, with initial positions drawn so that both vehicles must start to adjust their speed before entering the conflict area. In both simulations, we consider the intersection length to be 10 meters. The dynamics are discretized by $\Delta t=0.1s$ in both settings.

\begin{figure}[h]
    \centering
    \includegraphics[width=\linewidth]{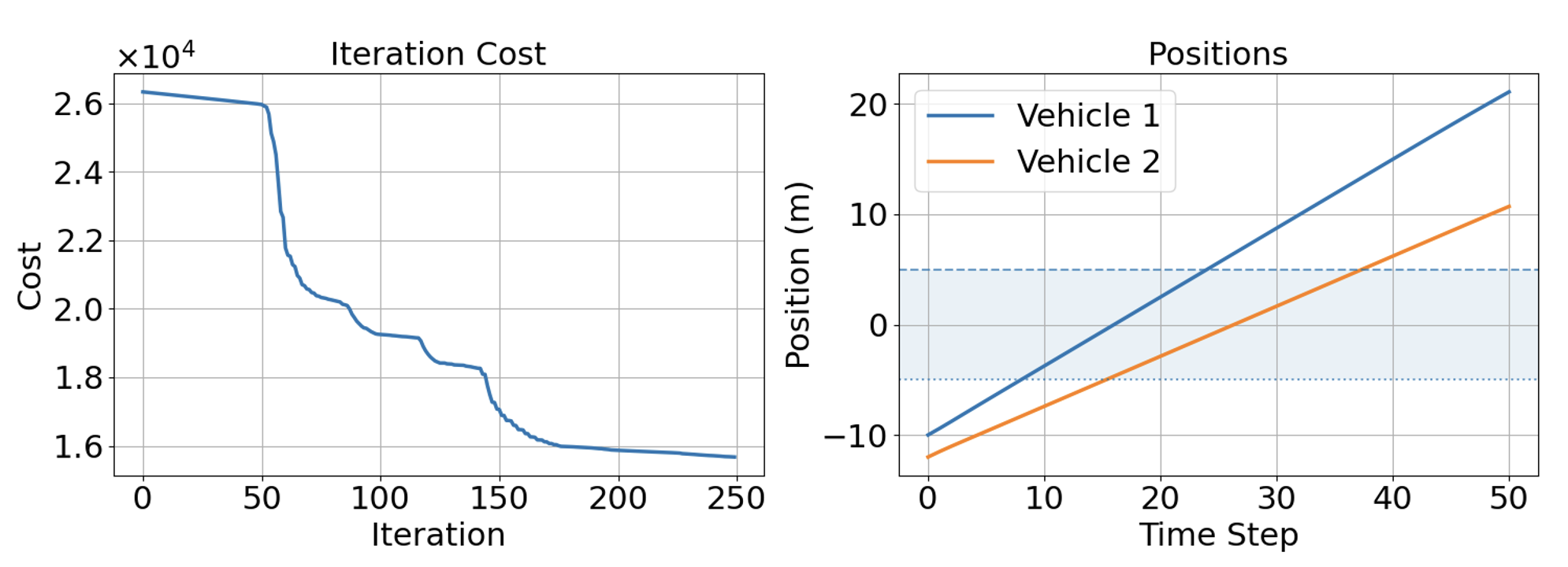}
    \caption{Offline cost (left) and vehicle positions (right).}
    \label{fig:offline_1}
\end{figure}

\begin{figure}[h]
    \centering
    \includegraphics[width=\linewidth]{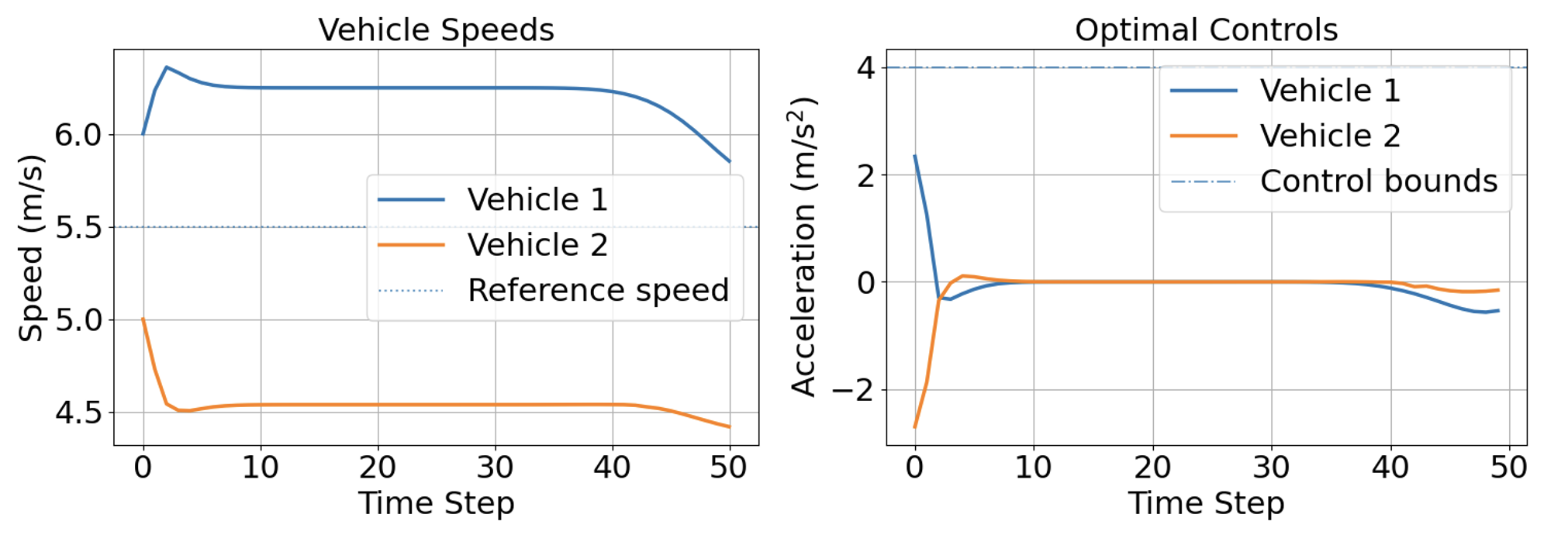}
    \caption{Vehicle speeds (left) and controlled acceleration (right).}
    \label{fig:offline_2}
\end{figure}

\begin{figure}[h]
    \centering
    \includegraphics[width=\linewidth]{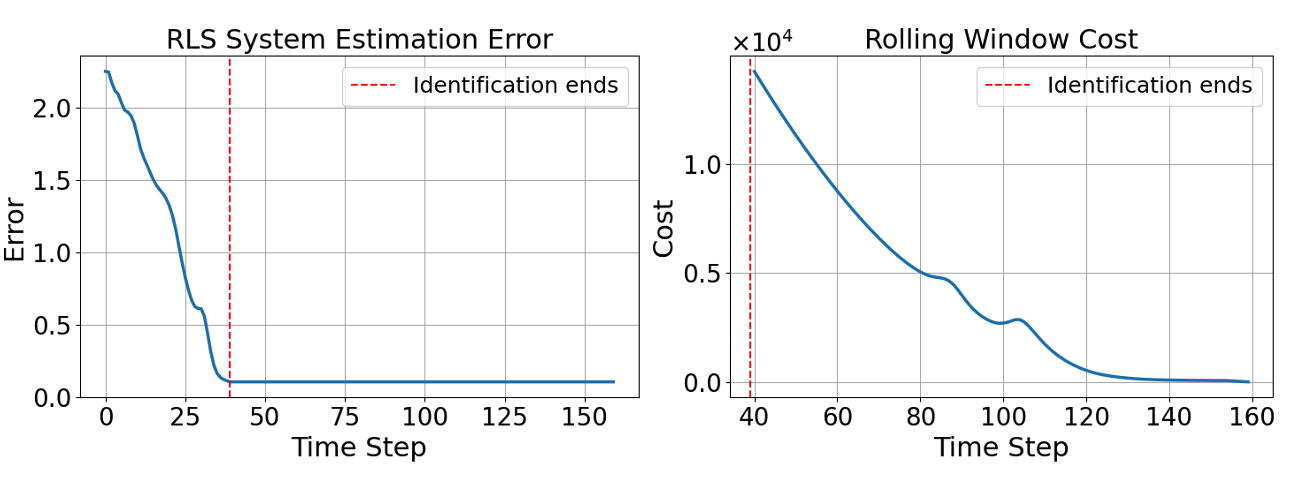}
    \caption{RLS identification error (left) rolling window cost (right).}
    \label{fig:online_1}
\end{figure}

\begin{figure}[h]
    \centering
    \includegraphics[width=\linewidth]{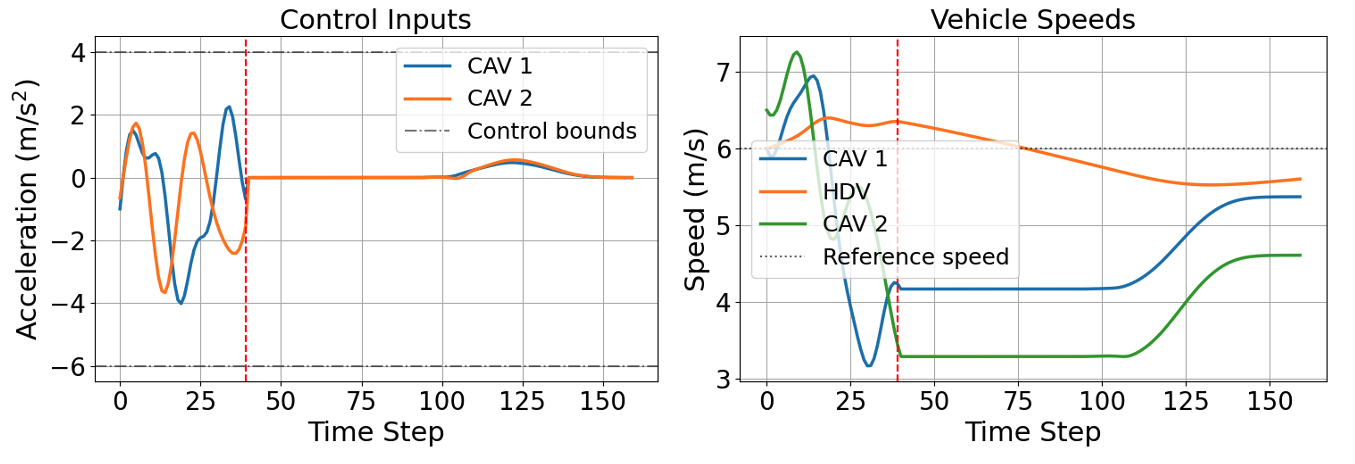}
    \caption{CAV control inputs (left) and vehicle speeds (right).}
    \label{fig:online_2}
    \vspace{-0.5cm}
\end{figure}

In the offline algorithm simulation, we set the full horizon to be $T=50$. The system model is assumed available, and the policy is computed over the entire finite horizon around $250$ iterations. We then test the trained policy on a scenario where two vehicles start at a similar distance to the conflict area at close speeds. The left figure in Fig.~\ref{fig:offline_1} shows that the cost is overall decreasing across iterations and converges at around $200$ iteration. The right figure shows that two vehicles passing through the intersection with different speeds without conflicts, with Vehicle~1 crossing earlier and Vehicle~2 following after sufficient separation. In Fig.~\ref{fig:offline_2}, it shows that both vehicles begin adjusting their speeds before entering the conflict region and cruising through the intersection without aggressive behavior.

In the online simulation, 
%we consider a mixed traffic intersection scenario by adding one additional HDV. 
the objective remains the same to control the two CAVs, but with the presence of the HDV. The HDV does not follow a prescribed model available to the controller due to unknown human driver behaviors. Instead, it reacts to the surrounding vehicles through an unknown human driving response. Consequently, the system parameters $A$ and $B$ are no longer available and must be identified from data.
% In the online simulation, the same intersection scenario and cost structure are retained. However, the system model is not assumed to be known a priori and is estimated from the data during the identification phase, followed by policy updates in a receding-horizon manner. 
During the RLS phase, the control inputs are persistently exciting with a Gaussian noise with $\mu=0$ and $\sigma=1.5$ applied.
%, which produces noticeable fluctuations in both speed and acceleration.
These random oscillations are reflected in the vehicle speed and controls in Fig.~\ref{fig:online_2} before the identification stops.
% \tp{Can we add diagrams of the learned policies as functions of the state?}
During this phase, the initial policy is taken as $\pi_0(x_0)=0$ plus the noise. In this scenario, the identification phase ends at $t=40$, as shown by the left figure in Fig.~\ref{fig:online_1}, indicating the rapid decay of the RLS estimation error decreases close to $0$. Once the identification is completed, the noise injection into the acceleration is stopped, and the CAVs accelerate in a coordinated way in the presence of the HDV. The rolling window's length is set to $\bar{T}_s=4$. The learned policies are further plotted as functions of the states in Fig.~\ref{fig:online_4}.
%The rolling window objective $\hat{J}_s$ decreases sharply after the identification phase. 
The position trajectories in Fig.~\ref{fig:online_3} show that all vehicles start well upstream of the intersection, enter the conflict region after the identification stage, and cross it smoothly with appropriate spacing. Furthermore, the pairwise distance in Fig~\ref{fig:online_3} shows the separations between conflict vehicle pairs. The distances remain above the prescribed safety threshold.
% In addition, the fixed window inner cost shows a decreasing pattern, which is consistent with local improvement within each online update, as illustrated in Fig.~\ref{fig:online_3}.

\begin{figure}[h]
    \centering
    \includegraphics[width=0.95\linewidth]{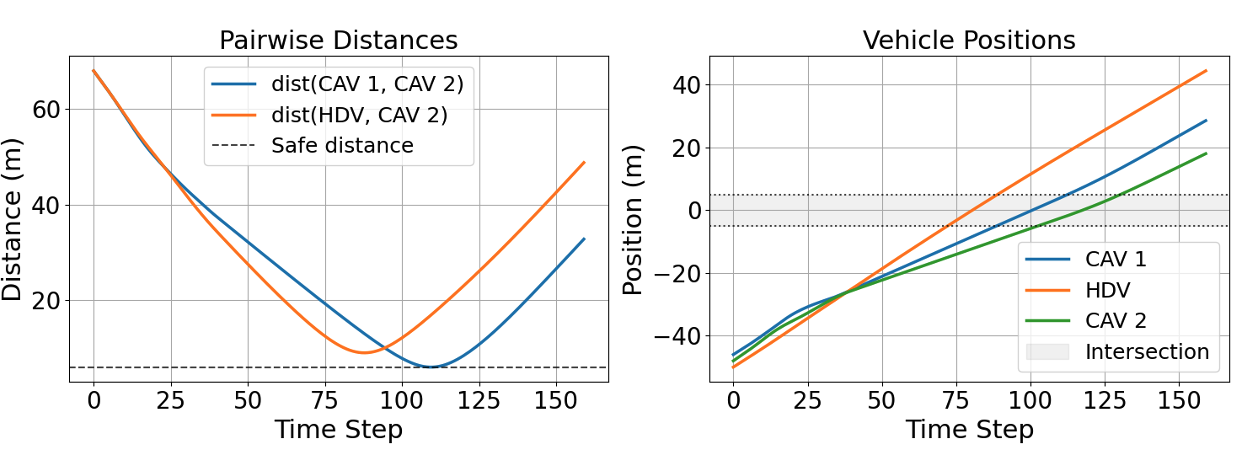}
    \caption{Pairwise vehicle distances (left) and trajectories (right).}
    \label{fig:online_3}
\end{figure}

To validate our results, we run a SUMO simulation based on the optimal controllers from the online algorithms with the HDV present. As shown in Fig.~\ref{fig:sumo}.(a), both CAVs enter the intersection without any learned controller applied, which leads to a collision at the center of the intersection, while Fig.~\ref{fig:sumo}.(b) shows that by applying the proposed online algorithm, CAVs could pass through coordinately without sacrificing efficiency on waiting. Although this comparison is intended only as a simple qualitative demonstration, it provides additional evidence that the proposed framework can be transferred to a larger-scale vehicle coordination setting. 
\begin{figure}[t]
    \centering
    \includegraphics[width=0.95\linewidth]{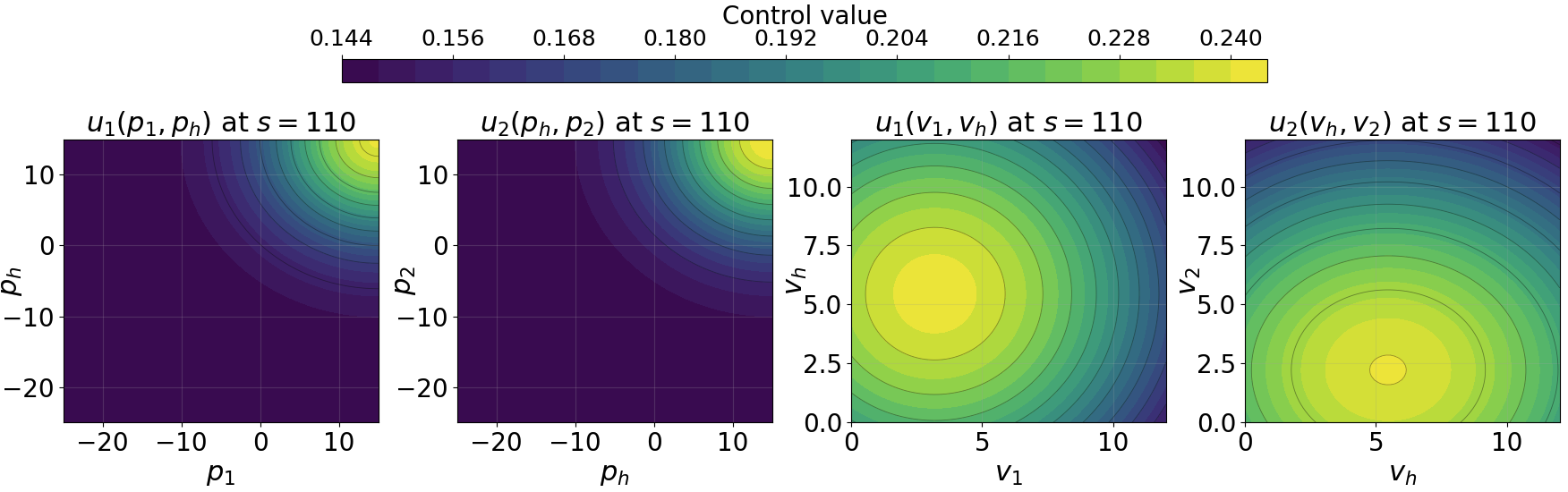}
    \caption{Learned policy as functions of the states.}
    \label{fig:online_4}
\end{figure}

\begin{figure}[]
    \centering
    \includegraphics[width=0.95\linewidth]{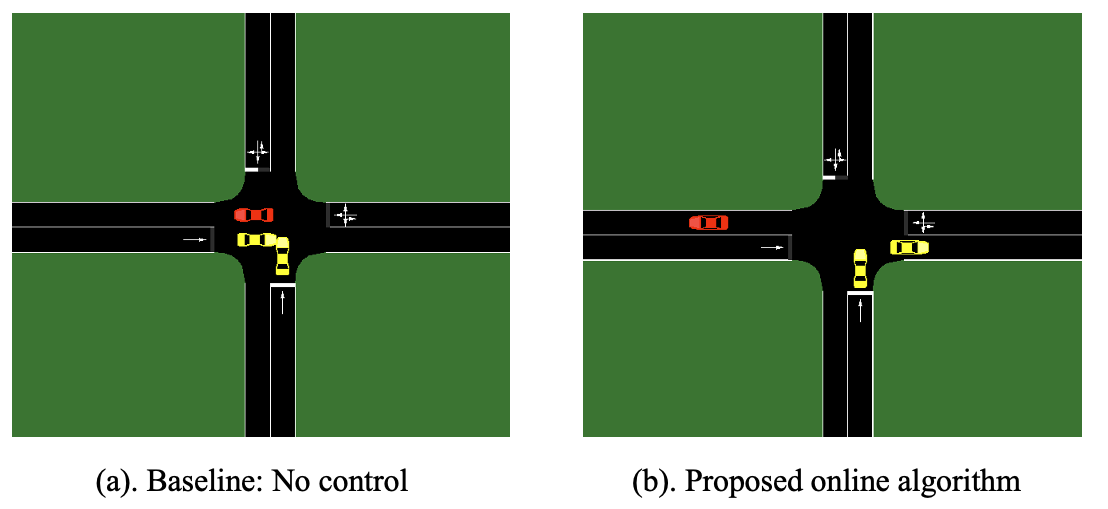}
    \caption{Sumo scenarios with CAVs (yellow) and HDV (red).}
    \label{fig:sumo}
    \vspace{-0.3cm}
\end{figure}

\section{Concluding Remarks}
In this paper, we developed a functional learning approach for finite-horizon team-optimal coordination in signal-free traffic scenarios under both offline and online settings. The policy was updated via an implicit discrete Fréchet derivative in an RKHS. In the offline setting, updates were performed over the full horizon using sampled trajectories under known dynamics, whereas in the online setting, unknown dynamics were recursively estimated via RLS, followed by receding-horizon policy updates to reduce computational burden. Simulation results demonstrated the effectiveness of the proposed method.

Future work should focus on further reducing computational burden, extending the framework to constrained and safety-critical coordination problems, and validating the proposed algorithms on large-scale traffic scenarios.

\section{Acknowledgment}

The authors wish to thank Leon Khalyavin for providing the generic code of the functional learning tool-set that has been customized to the application at hand in this paper.

\bibliographystyle{IEEEtran}
\bibliography{reference,ids}

\end{document}